\documentclass[oupdraft]{bio}
\usepackage{url}
\usepackage{amsmath}
\usepackage{amsfonts}
\usepackage{bbm}
\usepackage{graphicx,psfrag,epsf}
\usepackage{enumerate}
\usepackage{makecell}
\usepackage{adjustbox}

% Add history information for the article if required
\history{Received -;
revised -;
accepted -}

\begin{document}

% Title of paper
\title{A flexible Bayesian framework for individualized inference via adaptive borrowing}

% List of authors, with corresponding author marked by asterisk
\author{ZIYU JI$^\ast$, JULIAN WOLFSON\\[4pt]
% Author addresses
\textit{Division of Biostatistics, School of Public Health, University of Minnesota, \\
420 Delaware St.SE, Minneapolis, Minnesota 55455, U.S.A.}
\\[2pt]
% E-mail address for correspondence
{jixxx311@umn.edu}}

% Running headers of paper:
\markboth%
% First field is the short list of authors
{Z. Ji and others}
% Second field is the short title of the paper
{A flexible Bayesian framework for individualized inference}

\maketitle

% Add a footnote for the corresponding author if one has been
% identified in the author list
\footnotetext{To whom correspondence should be addressed.}

\begin{abstract}
{The explosion in high-resolution data capture technologies in health has increased interest in making inferences about individual-level parameters. While technology may provide substantial data on a single individual, how best to use multisource population data to improve individualized inference remains an open research question. One possible approach, the multisource exchangeability model (MEM), is a Bayesian method for integrating data from supplementary sources into the analysis of a primary source. MEM was originally developed to improve inference for a single study by asymmetrically borrowing information from a set of similar previous studies and was further developed to apply a more computationally intensive symmetric borrowing in the context of basket trial; however, even for asymmetric borrowing, its computational burden grows exponentially with the number of supplementary sources, making it unsuitable for applications where hundreds or thousands of supplementary sources (i.e., individuals) could contribute to inference on a given individual. In this paper, we propose the data-driven MEM (dMEM), a two-stage approach that includes both source selection and clustering to enable the inclusion of an arbitrary number of sources to contribute to individualized inference in a computationally tractable and data-efficient way. We illustrate the application of dMEM to individual-level human behavior and mental well-being data collected via smartphones, where our approach increases individual-level estimation precision by 84\% compared with a standard no-borrowing method and outperforms recently-proposed competing methods in 80\% of individuals.} {Bayesian model averaging; individualized inference; multisource data borrowing; supplementary data.}
%a recently proposed extension, the iterated MEM (iMEM), enables the borrowing with a larger number of supplementary sources. However, iMEM fixes the number of supplementary sources selected, potentially creating bias and/or compromising efficiency when, as in all practical cases, the number of exchangeable sources is unknown. We propose the data-driven MEM (dMEM), a two-stage approach that includes both source selection and clustering to improve inference on a primary source in a computationally tractable and data-efficient way. dMEM retains the asymptotic and small-sample properties of MEM and iMEM while reducing bias and increasing precision in parameter estimation. 
\end{abstract}

\section{Introduction}
\label{sec:intro}

Over the past few decades, the need to efficiently leverage large amounts of data from multiple sources to support analysis and decision-making has grown dramatically in many disciplines including the biomedical, social, and computational sciences \citep{raghupathi2014big,analytics2016age,matheny2020artificial}. Specifically in the biomedical area, researchers have developed techniques for integrating supplemental or historical data into the analysis of a primary clinical trial in order to increase the precision of parameter estimation \citep{viele2014use, han2017covariate}. Data borrowing can also be used when the primary source consists of multiple measurements on a single individual and the target is individual-level inference; for example, \cite{mejia2015improving} improves the reliability of subject-level resting-state fMRI parcellation by borrowing strength from a larger population of subjects, and \cite{jonsen2016joint} models animal movement behaviours by borrowing behavioural states of other individuals. In multi-source borrowing problems, the key challenge is determining how much to borrow from each supplementary source. Many possible approaches to optimal data borrowing have been proposed, with most belonging to two general perspectives: static shrinkage estimators with specified amounts of borrowing \citep{pocock1976combination,whitehead2008bayesian,hobbs2011hierarchical,rietbergen2011incorporation,french2012using} and adaptive weighting on supplementary sources \citep{smith1995bayesian,neuenschwander2010summarizing,doi2011meta,murray2014semiparametric,rover2020dynamically,papanikos2020bayesian}. 

% With the availability of high-resolution data on individuals, there is increasing interest in making \emph{individualized inferences} about parameters quantifying the effects of treatments, exposures and policies. However, even a substantial number of observations on a single individual may be insufficient to make reliable inference, it may therefore be necessary to borrow information from the broader study population to reinforce these inferences.

The information borrowing process requires careful consideration of between-source heterogeneity; for example, conventional statistical methods which assume fully exchangeable supplementary data sources are deficient because the results are sensitive to inter-cohort bias. As a simple example, in the clinical trial context, one may wish to borrow more strongly from historical studies conducted more recently and on more similar populations to the primary study of interest. One promising method in this area is the multisource exchangeability model (MEM) proposed by \cite{kaizer2018bayesian}, a Bayesian model averaging \citep{hoeting1999bayesian,fragoso2018bayesian} method which integrates data arising from a mixture of exchangeable and nonexchangeable supplementary sources into the analysis of a primary source. The MEM considers all possible subsets of $M$ supplementary sources and computes a weight for each subset (based on the exchangeability of the sources in that subset with the primary source) that is used in model averaging. Since it considers all subsets of the $M$ sources, a major limitation of the MEM approach is that it scales exponentially with the number of supplementary data sources, limiting its application to cases with a small number of supplementary sources \citep{hobbs2018bayesian,kaizer2019basket}. With the ubiquity of high-resolution data capture technologies including wearable sensors and biomedical imaging, individual-level inference could be informed by data from hundreds or thousands of supplementary sources (i.e., individuals), rendering the standard MEM computationally intractable. Recently, \cite{brown2020iterated} introduced the iterated multisource exchangeability model (iMEM), which identifies, in linear computational time, the $q$ most exchangeable sources to include in a final MEM model. While iMEM allows MEM to be applied with many sources, the number of included sources $q$ is fixed (usually to $\approxeq 10$) to ensure that the final "all subsets" MEM remains computationally feasible. Fixing the number of included sources in this way may induce bias and/or inefficiency if the number of truly exchangeable sources is either smaller or (as is common in individual inference problems) much larger than $q$. 

A motivating example is from Daynamica \citep{fan2015smartrac}, a research oriented mobile phone application that records participants' daily activities and trips as well as their reported emotional status. There is evidence showing that the daily trip activities are associated with people's emotional states \citep{abou2011effect,de2016travel,zhu2018daily} hence transportation researchers are interested in developing tools to help promoting traffic safety and mental well-being through daily trips. However, the effects tremendously vary between individuals and the data collected from each person is usually not enough for obtaining good individualized statistical inferences on all combinations between trip modes and emotional states. To improve the estimating precision and have the tool available on mobile devices, the MEM framework could be applied but at the same time must to be able to automatically borrow from a large number of homogeneous or heterogeneous individuals (i.e., supplementary sources) without being overly computationally burdensome.

In this article, we propose the data-driven MEM (dMEM) approach, a novel method that is capable of identifying and leveraging many potentially nonexchangeable supplementary sources to improve inference on a primary source in a computationally tractable way. Inspired by MEM and iMEM, dMEM retains their desirable properties (avoiding the limiting assumption of exchangeability among the supplemental sources\citep{kaizer2018bayesian} and reducing the dimension of the MEM model space in the presence of many supplementary sources \citep{brown2020iterated}) while allowing an arbitrary number of supplementary sources to contribute to the final MEM model. Our simulation studies show that dMEM is able to leverage data from a much larger number of supplementary sources than MEM and iMEM, resulting in larger effective supplemental sample size, higher precision, better posterior efficiency and lower bias than its predecessors. 

The remainder of the article proceeds as follows. Section~\ref{sec:overview} provides an overview of MEM and iMEM, and introduces key notation. Section~\ref{sec:method} describe the dMEM methodology with Section~\ref{sec:justification} describes a motivating study showing how combining supplementary data sources can improve inference on a primary source. Section~\ref{sec:prop_asymp} describe the computational complexity and asymptotics of dMEM. Section~\ref{sec:sim} presents a simulation study comparing the performance of dMEM with competing methods across multiple scenarios. Section~\ref{sec:app} provides a real-world application of dMEM. We conclude with a brief discussion in Section~\ref{sec:diss}.

\section{Overview and Notation}
\label{sec:overview}

%\subsection{}
%\label{sec:mem}

MEM was initially developed to address the problem of improving the estimation of treatment effects in a pivotal clinical trial by borrowing information from other potentially similar studies. It has been shown to achieve both less bias and greater efficiency when compared with competing methods \citep{kaizer2018bayesian}. Suppose we have a single primary source $P$ with $n$ observations and $H$ independent supplementary sources with $n_h$ observations each. A supplementary source is said to be \emph{exchangeable} with the primary source when they share the same value of the parameter of interest. All the sources together compose the observed data $D$. Our goal is to estimate the parameter $\theta_p$ for the primary source. Corresponding parameters for supplementary source $h$ are denoted as $\theta_h$; sources $h$ and $p$ are exchangeable if $\theta_h=\theta_p$, in which case we define an exchangeability indicator $S_h=1$. 

Under the standard MEM framework, a model $\Omega_k, k=1,\dotsc,K$ is associated with each of the $K=2^H$ possible exchangeability configurations, distinguished by a set of source-specific exchangeability indicators ($S_1=s_{1,k},\dotsc,S_H=s_{H,k}$) where $s_{h,k} \in \{0,1\}$. Bayesian model averaging is applied to average across all the $K=2^H$ possible models representing different exchangeability scenarios. Posterior weights $w_k$ are estimated for each possible model $\Omega_k$ and used for posterior inference on $\theta_p$. 

Let $\Theta = (\theta_p,\theta_1,\dotsc,\theta_h)$, $L(\Theta \mid D,\Omega_k)$ be the likelihood for data $D$ under model $\Omega_k$, and denote by $\pi(\Theta \mid \Omega_k)$ the prior density of $\Theta$ under $\Omega_k$. Then, the integrated marginal likelihood of $\Omega_k$ is $$p(D \mid \Omega_k)=\int L(\Theta \mid D,\Omega_k)\pi(\Theta \mid \Omega_k)d\Theta.$$ Posterior weights for each model $\Omega_k$ are $$w_k=p(\Omega_k \mid D)=\frac{p(D \mid \Omega_k)\pi(\Omega_k)}{\sum^{K}_{i=1}p(D \mid \Omega_i)\pi(\Omega_i)},$$ where $\pi(\Omega_k)$ is the prior probability that $\Omega_k$ is the true model, and is independently specified for each model with respect to sources, as $\pi(\Omega_k)=\pi(S_1=s_{1,k},\dotsc,S_H=s_{H,k})=\pi(S_1=s_{1,k})\times\dotsc\times\pi(S_H=s_{h,k}).$ The posterior distribution of $\theta_p$ given data $D$ is the weighted average of the $K$ MEM posteriors: $$p(\theta_p \mid D)=\sum^{K}_{k=1}w_k p(\theta_p \mid \Omega_k,D).$$ 

The Gaussian special case of MEMs has a number of desirable properties. It can be easily shown that with known variance, a non-informative prior on all $H$ sources  $\left(\pi(S_h=1)=\frac{1}{2}\right)$, and a flat prior on the Gaussian mean $\mu_k$ \citep{gelman2006prior} for each model ($\pi(\mu_k \mid \Omega_k) \propto 1$), there are closed forms for $p(D \mid \Omega_k)$, posterior weights $w_k$, and the posterior distribution of $\mu$, which is a Gaussian mixture with model-specific posteriors $p(\mu_p \mid \Omega_k,D)$ weighted by the posterior weights $w_k$. \citep{kaizer2018bayesian}

% Asymptotically, the posterior weight $w_k=1$ when all supplementary sources in model $\Omega_k$ are exchangeable with the primary source. Also, under the Gaussian case, the posterior expectation of mean almost surely converges to the true mean.

%\subsection{Iterated MEM (iMEM)}
%\label{sec:imem}

While MEM has appealing theoretical properties and practical benefits, it scales exponentially with the number of secondary sources. In some circumstances there may be dozens to thousands of secondary sources available, the required running time and computational power make applying MEM not feasible. In response to this computational limitation, \citeauthor{brown2020iterated} proposed the iterated MEM (iMEM). iMEM is based on the principle that marginal posterior MEM weights accurately indicate which supplementary sources are most similar to the primary source. In iMEM, marginal MEM models $M_h$ are fit considering only the primary source and one supplementary source $h$. The posterior weight $w^*_h$ from the marginal model $M_h$ serves as the similarity score $T_h$ between supplementary source $h$ with the primary source. The final step is applying the MEM framework on the primary source with the top $q$ supplementary sources with highest $T_h$ to obtain inference on $\theta_p$. The number of sources chosen for the final MEM, $q$, is determined by the available computational resources; on a standard desktop machine or eight regular cluster cores, $q$ must generally be smaller than 15 to make the final MEM computationally feasible with current computing technology.

% This fact does not cause a problem for the data-borrowing scenario which originally motivated the MEM approach: combining results from (a small number of) previous trials to inform effect estimation in the current one.

%% PARAGRAPH THAT MOTIVATES WHY WE NEED dMEM %%

Although iMEM provides an inspiring way to handle large number of supplementary sources, there are few reasons showing iMEM is not a perfect solution. First, the number of truly exchangeable sources is generally unknown in real-world data, and there is no obvious and straightforward way to choose $q$. \citeauthor{brown2020iterated} showed that setting $q$ to the maximum possible number of sources that can be accommodated computationally in the final MEM is not always the best strategy if the number of truly exchangeable sources is less than that value. The estimating efficiency may be reduced and the wrongly included nonexchangeable sources may also introduce extra bias to the posterior estimations. Second, in situations with many exchangeable secondary sources, including at most $q$ sources risks omitting other truly exchangeable secondary sources and "wastes" exchangeable data which could be used to improve our parameter inference. 

%this paper, we assume that computational resources are limited such that it is not feasible to fit a full MEM with more than $M$ sources, hence incorporating data from additional supplementary sources requires either selecting the "top $M$" supplementary sources or combining sources to form $M$ new (mostly larger) sources. Suppose that due to limitation on computing power, we could only fit up to 10 sources in MEM. In this case, the regular iMEM method would calculate the marginal weights for all sources and choose 10 sources with highest weights to be fitted into the final MEM procedure. \\

\section{dMEM: Data-driven Multisource Exchangeability Models}
\label{sec:method}

Our goal is this paper, therefore, is to develop a data-driven method for identifying and including all exchangeable secondary sources in a final MEM while limiting computational complexity. We propose the data-driven multisource exchangeability models (dMEM) method, a two-stage approach which selects and aggregates secondary sources for inclusion in a full MEM. Figure~\ref{fig:diagram} presents a diagram of the method. 

%The key to our method is that secondary sources identified as possibly exchangeable with the primary source are combined to form up to $q$ new (mostly larger) sources which are included in the final MEM. 

\subsection{Proof of Concept}
\label{sec:justification}

In this section, we present a simulation example to illustrate the potential gains achieved by incorporating data from additional supplementary sources beyond the "top $q$" sources selected by iMEM. Figure~\ref{fig:motivsim} compares the posterior mean-squared error (MSE) and expected supplemental sample size (ESSS, proposed by \citet{hobbs2013adaptive}, and defined in Section \ref{sec:sim}) of dMEM and iMEM with $q = 10$ in a simple scenario where there are 100 supplementary sources of which 60 are exchangeable with the primary source. dMEM achieves much lower MSE with about 4 times higher ESSS compared with iMEM. The results indicate that it is possible to outperform iMEM on both estimation accuracy and efficiency when the limitation on computational power exists. In addition, among all the combining methods we have tested, the option with 10 clusters and 6 exchangeable groups in a cluster achieves better performance than all the other methods, which suggests we want to include only the exchangeable sources in the final MEM.

\subsection{Source Selection}
\label{sec:method_step1}

In the real world, we cannot know which supplementary sources are truly exchangeable with the primary source. However, as shown in the results of the proof of concept simulation, it is beneficial to select only exchangeable sources for inclusion in the final MEM model.  \citeauthor{brown2020iterated} showed that the marginal posterior weight for each source ('marginal score' or 'marginal weight') could be used as a quantitative measure of its exchangeability; thus, we propose to select for inclusion in the final MEM supplementary sources whose marginal scores exceed a data-driven threshold. 
%with high marginal posterior weights for inclusion in the final MEM. All other sources are marked as nonexchangeable and are not considered in the final MEM. \\
%The source selection process is based on a single threshold in marginal scores after the supplementary sources are sorted in descending order of their marginal posterior weights. 

Suppose we have calculated marginal scores $\{w^{*}_{1},\dotsc,w^{*}_{H}\}$ for $H$ supplementary sources with mixed exchangeability, sorted as $\{w^{*}_{i_1},\dotsc,w^{*}_{i_H}\}$ (for simplicity of exposition, we assume that $w^*_j \neq w^*_k$ for $j \neq k$, i.e., there are no tied scores). In trivial cases, either all the supplementary sources are selected and dropped, which will be discussed at the end of this section. Otherwise, a data-driven threshold $\tau$ must lie between two consecutive weights $w^{*}_{i_c}$ and $w^{*}_{i_{c+1}}$ and split the sources into two sets $\{w^{*}_{i_1},\dotsc,w^{*}_{i_c}\}$ and $\{w^{*}_{i_{c+1}},\dotsc,w^{*}_{i_H}\}$. The sources $S_{\geq \tau} = \{S_{i_1},\dotsc,S_{i_c}\}$ having weights greater than or equal to $\tau$, are classified as exchangeable; the sources $S_{< \tau} = \{S_{i_{c+1}},\dotsc,S_{i_H}\}$ are dropped. 

The question remaining is how the threshold $\tau$ should be determined. The asymptotic existence of a change-point in marginal weights is shown in Section~\ref{sec:asymp}, since the marginal weights of exchangeable and nonexchangeable sources approach 1 and 0 respectively as $n$ (the number of observations per source) goes to infinity. Hence, in finite samples we seek to identify a break-point in the marginal scores of sources. We propose to use a likelihood ratio test-based change-point detection method first proposed by \cite{hinkley1970inference}. The presence of a change-point at a fixed value $\tau_c$ is tested via a hypothesis test wherein the null hypothesis $H_{0}$ is no change-point in the mean marginal scores at $\tau_c$ and the alternative hypothesis $H_{1}$ corresponds to the occurrence of a change-point at $\tau_c$. Under $H_{1}$, for a model with change-point at $w^*_{i_c}$, the log likelihood is given as $$l(w^{*}_{i_c}) = log(f(w^{*}_{i_1},\dotsc,w^{*}_{i_c} \mid \hat{\theta_{1}})) + log(f(w^{*}_{i_{c+1}},\dotsc,w^{*}_{i_H} \mid \hat{\theta_{2}})),$$ where $f(\cdot \mid \theta)$ is the probability density function of the weights corresponding to parameters $\theta$ and $\hat{\theta}$ is the maximum likelihood estimator of the parameters. While under $H_{0}$, we have $$l(w^{*}_{0}) = log(f(w^{*}_{i_1},\dotsc,w^{*}_{i_H} \mid \hat{\theta_{0}})).$$ Then, the likelihood ratio test statistic could be represented as  $$\lambda = -2[l(w^{*}_{0})-\max_{w^{*}_{i_c}}l(w^{*}_{i_c})].$$ The null hypothesis is rejected when $\lambda > b$, where $b$ is a pre-determined threshold. A variety of methods have been proposed to get an appropriate value of $b$, but since this is not the main topic of this paper, we do not pursue this detail here and rely on the rules implemented in software. The estimated location of the change-point is decided by the $\hat{w^{*}_{i_c}}$ that maximizes $l(w^{*}_{i_c})$. The \texttt{changepoint} package (version 2.2.2) in R \citep{killick2014changepoint} is used to apply the likelihood ratio test-based single change-point detection in our method.

A potential concern is that the likelihood ratio test-based change-point detector uses a Gaussian likelihood, while there is no guarantee that marginal scores are Gaussian. In auxiliary simulations (shown in Web Appendix A), we found that posterior weights for exchangeable and nonexchangeable sources were generally not normally distributed, nor did they seem to follow any particular parametric family. However, as demonstrated by our simulations in Section~\ref{sec:sim}, the performance of the change-point detector appears to be very good even though the underlying distributional assumption is wrong, likely because the likelihood ratio is primarily driven by a difference in means between sources with marginal scores above and below the change-point.

In Section~\ref{sec:asymp}, we prove that, asymptotically in non-degenerate cases, a change-point always exists in the set of marginal scores. In practice, there are two circumstances under which no change-point may be detected. When all sources have similar marginal weights, then we must decide whether to discard all or keep all the sources. A rule-of-thumb threshold of 'low marginal weights' is 0.2, depending on the sample size of both primary and supplementary source. Otherwise, the recommended approach is passing all the sources to the next step. Another reason the change-point may not be detected is that the ordered marginal scores for all sources decrease at a constant rate. In light of the asymptotic behavior of the weights, and as shown in Section~\ref{sec:sim_scen2}, this scenario is uncommon in practice since the change-point detector and marginal weights are sensitive to heterogeneity. In order to better handle this case, additional to the change-point detector, one may always choose to use a rule-of-thumb threshold or a hard threshold generated from the distribution of marginal posterior weights to eliminate some of the sources. No threshold is used in this paper for clearer demonstrations.

\subsection{Source Clustering}
\label{sec:method_step2}

The second step of dMEM is combining the remaining sources into larger clusters and running a final MEM on the set of "new" sources defined by these clusters. In the simulation study (see Section~\ref{sec:sim}), we compare different ways of clustering the selected sources. Although the choice of clustering methods does not influence the final result asymptotically, it can influence performance of dMEM with finite samples. The clustering of supplementary sources matters potentially when the change-point detector is not perfectly selecting the truly exchangeable sources. More guidance on source clustering is available in Section~\ref{sec:guide}.  

Finally, we fit a MEM on the combined clusters $S^{*}_1,\dots,S^{*}_M$ to obtain estimation and inference of our parameter of interest. The final results are generated assuming a Gaussian mixture MEM with known variances as stated in Section~\ref{sec:overview}. That is, the posterior mean is calculated as a finite mixture of Gaussian distributions with known variances.

\section{Properties and Asymptotics}
\label{sec:prop_asymp}

\subsection{Data-driven MEM Properties}
\label{sec:prop}

%The properties of dMEM we discuss including its computational complexity and the distribution of marginal posterior weights. Since the later topic is not the main focus of this article, an exploration on the the distribution of marginal posterior weight can be found in Supplementary Material Section~\ref{sec:prop_dist}. \\

The computational complexity of MEM grows exponentially with an increasing number of supplementary sources. In MEM, there are $N_{MEM} = 2^{H}$ different models to estimate for a set of $H$ supplementary sources. dMEM keeps the computational simplicity of marginal iMEM, with $N_{dMEM} = H + 2^{M}$ models where $M$ is the (fixed) maximum number of groups that can be fit in a MEM under the computational limitation. $N_{dMEM}$ is superior to $N_{MEM}$ especially when $H \gg M$, since the computational requirement grows exponentially for $N_{MEM}$ with respect to $H$ but only grows linearly for $N_{dMEM}$. 

Similar to iMEM, the $H$ marginal models could be individually fitted, which makes dMEM plausible to further improve computing efficiency by using parallel computing. Although the $2^{M}$ models in the final MEM are serial, since MEM do not have closed form and requires posterior sampling in most of the cases, which is usually computationally expensive, this is still a remarkable advantage for dMEM especially when there are large numbers of supplementary sources. 

\subsection{Data-driven MEM Asymptotics}
\label{sec:asymp}

The bias-variance trade-off is heavily considered by statisticians when designing methods for supplemental information borrowing.  When all exchangeable supplementary sources are assigned weights 1 and all nonexchangeable sources have weights 0, MEM should achieve its lowest bias. \cite{kaizer2018bayesian} provides the theoretical basis of the MEM approach by showing the consistency of both model-specific posterior weights and the posterior mean for the true mean $\theta_p$ in the MEM framework. Here, we demonstrate the consistency of source-specific dMEM marginal posterior weights. In addition, the results also show the consistency of the exchangeability of clustered sources and the existence of a change-point in marginal posterior weights to perfectly separate exchangeable and nonexchangeable supplementary sources. 

The dMEM marginal posterior weights used in source selection are calculated under the marginal iMEM framework as shown in Section~\ref{sec:overview}. Theorem~\ref{thm:consist_marginal} is shown by \cite{brown2020iterated} as Theorem 4.3. This is a special case of the consistency of MEM posterior weights, where the models only have one supplementary source. Thus, marginal posterior weights could be treated as measures of exchangeability for supplementary sources, which further allow us to select exchangeable sources based on the marginal posterior weights.

\begin{theorem}[Consistency of dMEM Marginal Posterior Weights]
\label{thm:consist_marginal}
As $n,n_{1},\dotsc,n_{H} \rightarrow \infty$, for supplementary source $S_{h},h=1,\dotsc,H$, if $s_{h}=1$ then the dMEM marginal posterior weight $w^{*}_{h} \rightarrow 1$, otherwise $w^{*}_{h} \rightarrow 0$.
\end{theorem}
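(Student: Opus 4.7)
My approach is to reduce the claim to the asymptotics of a single Bayes factor and then analyze that Bayes factor using the closed-form expressions for the Gaussian special case highlighted in Section~\ref{sec:overview}. For each supplementary source $h$, the marginal iMEM model space has only two configurations: $\Omega_1^{(h)}$ with $\theta_h = \theta_p$, and $\Omega_0^{(h)}$ with $\theta_h$ independent of $\theta_p$. Under the equal prior $\pi(S_h=1)=\pi(S_h=0)=1/2$ used throughout the paper, the marginal posterior weight collapses to $w^*_h = BF_h / (1+BF_h)$ with $BF_h = p(D \mid \Omega_1^{(h)}) / p(D \mid \Omega_0^{(h)})$, so it suffices to show $BF_h \to \infty$ when $s_h = 1$ and $BF_h \to 0$ when $s_h = 0$.

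Next I would carry out the computation in the Gaussian case. With known variance $\sigma^2$ and flat priors on the means, the two integrated marginal likelihoods are explicit Gaussian integrals; a direct calculation based on the decomposition $\mathrm{SSE}_{\text{pooled}} = \mathrm{SSE}_p + \mathrm{SSE}_h + \tfrac{nn_h}{n+n_h}(\bar{X}_p - \bar{X}_h)^2$ yields
$$\log BF_h \;=\; \tfrac{1}{2}\log\!\left(\tfrac{n n_h}{n+n_h}\right) \;-\; \tfrac{n n_h}{2(n+n_h)\sigma^2}(\bar{X}_p - \bar{X}_h)^2 \;+\; O(1).$$
Under $s_h=1$, the central limit theorem gives $\tfrac{nn_h}{(n+n_h)\sigma^2}(\bar{X}_p - \bar{X}_h)^2 = O_p(1)$ (asymptotically $\chi^2_1$), so the deterministic $\tfrac{1}{2}\log$-term, growing like $\tfrac{1}{2}\log\min(n,n_h)$, dominates and $\log BF_h \to \infty$; hence $w^*_h \to 1$. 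Under $s_h=0$, $(\bar{X}_p-\bar{X}_h)^2 \to (\theta_p-\theta_h)^2 > 0$ almost surely while its multiplier $\tfrac{nn_h}{2(n+n_h)\sigma^2}$ diverges, the quadratic term overwhelms the $O(\log n)$ correction, and $\log BF_h \to -\infty$, giving $w^*_h \to 0$.

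For a general likelihood I would swap the explicit integrals for a Laplace approximation, $\log p(D \mid \Omega_j^{(h)}) = \ell_j(\hat\theta) - \tfrac{d_j}{2}\log n + O(1)$ with $d_1=1$ and $d_0=2$; the exchangeable case yields a BIC-type penalty favoring $\Omega_1^{(h)}$ by $\tfrac{1}{2}\log n$ while the log-likelihood ratio is $O_p(1)$, and the non-exchangeable case makes $\ell_1(\hat\theta) - \ell_0(\hat\theta)$ diverge at the linear rate $-n\,\mathrm{KL}(\theta_p,\theta_h)$. The main obstacle is the deterministic--stochastic balance when $n$ and $n_h$ diverge at potentially very different rates, which requires a uniform argument showing that the log-growth still dominates; a secondary subtlety is that the flat prior on the mean is improper, but the offending normalizing constants cancel between the numerator and denominator of $BF_h$, consistent with the construction of \citet{kaizer2018bayesian}. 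Since the result is already the $H=1$ special case of Theorem 4.3 in \citet{brown2020iterated}, the main work remaining is to verify that the marginal weights defined here coincide with theirs, which is immediate from the definitions in Section~\ref{sec:overview}.
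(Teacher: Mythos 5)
Your argument is essentially correct, but note that the paper itself does not prove Theorem~\ref{thm:consist_marginal} at all: it simply invokes Theorem~4.3 of \cite{brown2020iterated}, observing that the marginal weight is the one-supplementary-source special case of MEM posterior-weight consistency --- exactly the reduction you relegate to your final sentence. What you supply beyond the paper is a self-contained derivation: writing $w^*_h = BF_h/(1+BF_h)$ under the $\pi(S_h=1)=\tfrac12$ prior and expanding the Gaussian Bayes factor as $\log BF_h = \tfrac{1}{2}\log\bigl(\tfrac{nn_h}{n+n_h}\bigr) - \tfrac{nn_h}{2(n+n_h)\sigma^2}(\bar{X}_p-\bar{X}_h)^2 + O(1)$ is the right decomposition, and the rate comparison is sound: under exchangeability the quadratic term is (exactly or asymptotically) $\chi^2_1$, hence $O_p(1)$, and is dominated by the Occam term $\tfrac12\log\tfrac{nn_h}{n+n_h} \ge \tfrac12\log\bigl(\tfrac12\min(n,n_h)\bigr)\to\infty$; under non-exchangeability the quadratic term diverges linearly in $\min(n,n_h)$ and swamps the logarithm, giving $w^*_h\to 1$ and $w^*_h\to 0$ in probability as claimed. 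The ``uniform in rates'' concern you flag is not a real obstacle for precisely this reason, since the stochastic term is tight regardless of how $n$ and $n_h$ diverge. One small inaccuracy: with flat priors the arbitrary normalizing constants do \emph{not} fully cancel in $BF_h$ (one improper prior appears in the numerator model, two in the denominator model), so the Bayes factor is only defined up to the convention fixed in the MEM closed form of \cite{kaizer2018bayesian}; this shifts $\log BF_h$ by a constant and is harmless asymptotically, but it should be acknowledged rather than asserted to cancel. Relative to the paper's citation-only treatment, your route buys transparency about which features drive consistency (the logarithmic complexity penalty versus the divergent mean-discrepancy term), extends with only notational changes to source-specific known variances, and sketches via Laplace/BIC how the result generalizes beyond the Gaussian case.
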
 

The result shown in Theorem~\ref{thm:cp_existence} significantly supports the source separation procedure and ensures that we can asymptotically select all the truly exchangeable sources into the final model. Theorem~\ref{thm:consist_exchange} confirms the consistency of exchangeability for clustered sources. It justifies that the clustered sources are asymptotically exchangeable with the true mean as long as the selected supplementary sources are all asymptotically exchangeable with the primary source. Also, the clustering method will not impact the exchangeability of the clustered sources when the sample sizes of supplementary sources approach infinity. This makes the final estimation under the MEM framework asymptotically unbiased.

\begin{theorem}
[Existence of Change-point and Perfect Separation]
\label{thm:cp_existence}
As $n,n_{1},\dotsc,n_{H} \rightarrow \infty$, if $ 0 < \sum^{H}_{h=1}s_{h} < H $, sort marginal posterior weights $\{w^{*}_{1},\dotsc,w^{*}_{H}\}$ in descending order $\{w^{*}_{i_1},\dotsc,w^{*}_{i_H}\}$, where $i_h (h=1,\dotsc,H)$ are ordering indexes. $\exists i_c$ s.t. $\mathbb{P}(i_t \leq i_c \mid s_{i_t}=1) = 1$ and $\mathbb{P}(i_t > i_c \mid s_{i_t}=0) = 1$ for $\forall t = 1,\dotsc,H$.
\end{theorem}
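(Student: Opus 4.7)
The plan is to leverage Theorem \ref{thm:consist_marginal} almost directly, since that result already delivers the key asymptotic behavior needed: marginal weights of exchangeable sources tend to $1$ and those of nonexchangeable sources tend to $0$, and these two limits are well-separated on $[0,1]$. The candidate change-point location in the sorted list will simply be the number of truly exchangeable sources.

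First, I would introduce the index sets $E = \{h : s_h = 1\}$ and $N = \{h : s_h = 0\}$. The hypothesis $0 < \sum_h s_h < H$ ensures both are nonempty; set $H_1 = |E|$ and take $i_c = H_1$ as the proposed change-point position in the descending-sorted list of weights.

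Second, I would fix any separator $\epsilon \in (0, 1/2)$. Applying Theorem \ref{thm:consist_marginal} source-by-source yields $\mathbb{P}(w_h^* > 1 - \epsilon) \to 1$ for each $h \in E$ and $\mathbb{P}(w_h^* < \epsilon) \to 1$ for each $h \in N$. Because $H$ is finite, a union bound over the $H$ sources gives
$$\mathbb{P}\!\left(\min_{h \in E} w_h^* > 1 - \epsilon \ \text{and}\ \max_{h \in N} w_h^* < \epsilon\right) \longrightarrow 1.$$
On this event every exchangeable weight strictly exceeds every nonexchangeable weight, so the descending sort must place all $H_1$ exchangeable sources in positions $1,\ldots,H_1$ and all remaining sources in positions $H_1+1,\ldots,H$. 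Hence for every $t$ with $s_{i_t}=1$ we have $i_t \le i_c$, and for every $t$ with $s_{i_t}=0$ we have $i_t > i_c$, which is the stated perfect separation.

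The main obstacle is mostly interpretational rather than technical. The statement writes the limiting conditional probabilities as being equal to $1$, whereas Theorem \ref{thm:consist_marginal} as presented is a convergence statement in $n, n_1, \ldots, n_H$; I would read $\mathbb{P}(\cdot) = 1$ in the sense consistent with the qualifier ``As $n, n_1, \ldots, n_H \to \infty$'' preceding the claim, so that the union-bound argument above suffices. A minor auxiliary point is the possibility of ties between weights across the two groups: such ties occur only on a null event in the limit (since the limiting values $0$ and $1$ are distinct and the weights concentrate around them), so the earlier no-tie simplification in Section \ref{sec:method_step1} is harmless here.
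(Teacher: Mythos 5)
Your proposal is correct and follows essentially the same route as the paper's own proof: invoke Theorem~\ref{thm:consist_marginal} to send exchangeable weights to $1$ and nonexchangeable weights to $0$, then conclude that the descending sort places all exchangeable sources above a change-point at position $i_c = \sum_h s_h$. Your $\epsilon$--union-bound formulation simply makes explicit the limiting probability statement that the paper's proof treats informally, so it is a slightly more careful rendering of the same argument rather than a different one.
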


\begin{proof}
By Theorem~\ref{thm:consist_marginal}, under asymptotic assumptions, for marginal posterior weights $\{w^{*}_{1},\dotsc,w^{*}_{H}\}$ we have $\mathbb{P}(w^{*}_{h} = 0 \mid s_{h}=0) = 1$ and $\mathbb{P}(w^{*}_{h} = 1 \mid s_{h}=1) = 1$. \\
So, if $\{0,1\} \in {s_{1},\dotsc,s_{h}}$, for descending $\{w^{*}_{i_1},\dotsc,w^{*}_{i_H}\}$, $\exists i_c$ s.t. $w^{*}_{i_1},\dotsc,w^{*}_{i_c} \rightarrow 1$ and $w^{*}_{i_{c+1}},\dotsc,w^{*}_{i_H} \rightarrow 0$. \\
Thus, $\mathbb{P}(w^{*}_{i_t} \in \{w^{*}_{i_1},\dotsc,w^{*}_{i_c}\} \mid s_{i_t}=1) = \mathbb{P}(i_t \leq i_c \mid s_{i_t}=1) = 1$ and $\mathbb{P}(w^{*}_{i_t} \in \{w^{*}_{i_{c+1}},\dotsc,w^{*}_{i_H}\} \mid s_{i_t}=0) = \mathbb{P}(i_t > i_c \mid s_{i_t}=0) = 1$
\end{proof} 

\begin{theorem}[Consistency of Exchangeability for Clustered Sources]
\label{thm:consist_exchange}
As $n,n_{1},\dotsc,n_{H} \rightarrow \infty$, if $\mu_{t_1},\dotsc,\mu_{t_M}\rightarrow\mu$, $\forall t_m \in {1,\dotsc,H}$, then $\mu^*_{m}\rightarrow\mu$, where $\mu^*_{m}$ is the mean of the cluster $S^*_{m}$, which is combined by sources $S_{t_1},...,S_{t_M}$.
\end{theorem}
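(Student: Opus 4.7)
The plan is to express $\mu^*_m$ as an explicit aggregation of the per-source statistics used to build the cluster, and then argue convergence via a standard convex-combination estimate. Because dMEM is stated in the Gaussian known-variance setting of Section~\ref{sec:overview}, the natural way to combine sources $S_{t_1},\dotsc,S_{t_M}$ into a single cluster $S^*_m$ is by pooling observations, yielding
$$\mu^*_m = \frac{\sum_{i=1}^{M} n_{t_i}\mu_{t_i}}{\sum_{i=1}^{M} n_{t_i}},$$
that is, a convex combination of the individual source means $\mu_{t_i}$ with nonnegative weights $\alpha_{t_i} = n_{t_i}/\sum_j n_{t_j}$ summing to one.

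Next, I would invoke the hypothesis $\mu_{t_i} \rightarrow \mu$ for each $i = 1,\dotsc,M$. A triangle-inequality argument then gives
$$|\mu^*_m - \mu| = \left|\sum_{i=1}^{M}\alpha_{t_i}(\mu_{t_i}-\mu)\right| \leq \sum_{i=1}^{M}\alpha_{t_i}|\mu_{t_i}-\mu| \rightarrow 0,$$
since every $|\mu_{t_i}-\mu| \rightarrow 0$ and the $\alpha_{t_i}$ lie in $[0,1]$. This delivers $\mu^*_m \rightarrow \mu$ for the pooling rule. Combined with Theorem~\ref{thm:cp_existence}, which guarantees that asymptotically only truly exchangeable sources survive the selection stage to be clustered, each cluster mean is therefore asymptotically exchangeable with the primary source, which is what justifies plugging the clustered summaries into the final MEM.

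The main obstacle is the authors' implicit claim that the conclusion does not depend on the particular clustering method. To cover arbitrary aggregation rules, I would argue that any clustering procedure produces a cluster-level summary that, in the Gaussian setting, is some continuous function $g$ of the member sources' sample means with $g(\mu,\dotsc,\mu)=\mu$; the continuous mapping theorem then transfers the limit. The delicate point is ruling out degenerate clustering rules (e.g., those whose weights concentrate on a single pathological source or depend on the samples in a discontinuous way), which the paper does not do formally. A clean resolution is to restrict attention to the class of weighted-average rules with bounded weights, for which the triangle-inequality display above is sufficient; more exotic clustering schemes would need an additional continuity assumption to yield the same conclusion.
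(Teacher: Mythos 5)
Your argument is correct and is essentially the paper's own: the paper's proof simply writes the cluster mean as the (unweighted) average $\mu^*_{m}=\frac{1}{M}\sum_{i=1}^{M}\mu_{t_i}$ and notes it converges to $\mu$, which is the same convex-combination-of-converging-means argument you give, just with equal weights instead of your sample-size weights $n_{t_i}/\sum_j n_{t_j}$. Your additional remarks on arbitrary aggregation rules and continuity go beyond what the paper proves (it only asserts informally that the clustering method does not matter asymptotically), but they do not change the substance of the argument.
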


\begin{proof}
It is easy to show that $\mu^*_{M}=\frac{1}{M}\sum^M_{m=1}{\mu_{t_m}}\rightarrow\mu$.
\end{proof} 

\section{Simulation Study}
\label{sec:sim}

The goal of our simulation study is to compare the results between dMEM and existing methods. We also tested different ways of clustering the sources selected as potentially exchangeable in the first stage of dMEM. Details of tested methods are shown in Table~\ref{tab:simclstmeth}. Note that it is completely infeasible to run a full MEM under our simulating scenarios (averaging across $2^{100}$ or $2^{500}$ models), so marginal iMEM ($q=5$ or $10$) serves as the main comparator in our study (in \citet{brown2020iterated}, marginal iMEM is shown to outperform several other methods). In addition to iMEM, we also consider a K-means algorithm \citep{hartigan1979algorithm} as a comparator that assigns sources into 5 or 10 clusters by clustering the sample means of sources. Posterior variances, biases, root-mean-square errors (RMSE) and effective supplemental sample sizes (ESSS) are used as the quantitative measures to evaluate the methods. 

ESSS \citep{hobbs2013adaptive} is an extension of prior effective sample size \citep{morita2008determining}, defined as $ESSS_{MEM}=\sum^{K}_{k=1}\{w_k\frac{1/v+\sum^{H}_{h=1}s_{h,k}/v_h}{1/v}-1\}$, where $1/v$ is the posterior precision of the reference model with no borrowing, and $1/v+\sum^{H}_{h=1}s_{h,k}/v_h$ is the posterior precision for $\Omega_k$. A higher value of ESSS indicates more borrowing from supplementary sources. Under most of our tested scenarios, the ESSS of dMEM is usually 2-3 times the ESSS of iMEM and K-means. From the nature of the methods, it is expected that dMEM will utilize more data, hence have larger ESSS. So, ESSS will not serve as a main measure in our analysis; interested readers can refer to supplementary materials to obtain summaries of ESSS (Web Appendix B, Figure 4). 

For all scenarios and methods, we repeat each simulation 1000 times with different random seeds to obtain the final results. The exchangeability status is unknown in our simulation. Simulated results are shown in Figure~\ref{fig:sim_plot}. Additional simulations on the robustness of dMEM with non-Gaussian sources or small number of exchangeable sources are available as supplementary simulation scenarios in Web Appendix C. All calculations are completed with R version 4.0.2 \citep{R}.

\subsection{Scenario I: Nonexchangeable sources with the same mean}
\label{sec:sim_scen1}

Here, we show the performance of dMEM under the basic setting with a fixed number of nonexchangeable sources with common shared mean. For data generation, the primary source contains 20 observations that are randomly sampled from $N(0,1)$. There are 60 truly exchangeable supplementary sources and 40 nonexchangeable ones (with mean 1), which have sample sizes uniformly distributed between 15 and 25, with standard deviations generated from $\text{Unif}(0.5,1.5)$. The settings are intentionally designed to challenge the change-point detector, by limiting the exchangeability of nonexchangeable sources to be not significantly different from exchangeable sources. However, with $q=5$ or $10$ for iMEM, the probability of iMEM wrongly including nonexchangeable sources is low. Results are shown in sub-panel (a) of Figure~\ref{fig:sim_plot}.

%In general, when exchangeability of supplementary sources is known, the posterior variance is slightly higher while estimation bias and MSE are both much smaller than the unknown case for dMEM approaches, while the differences for iMEM are very small. That is, source selection causes the majority part of the bias and MSE in dMEM. Meanwhile, we choose $q=10$ for iMEM, which makes the probability relatively low for wrongly including nonexchangeable sources. For both known and unknown exchangeability, the baseline iMEM performs worse than any of the dMEM approaches in terms of all comparing metrics. Without a special mention, we will focus on the comparison among simulations with unknown exchangeability in our later analysis. \\

All the clustering methods for dMEM perform better than iMEM. Evenly distributed, random clustering and random averaging method have the lowest RMSE, whose medians are almost 3 times smaller than the median RMSE of the best baseline method (iMEM $q=10$). Clustering methods with fewer clusters and involving single-source clusters have relatively higher RMSE. The posterior variance, bias and RMSE all increase with a smaller number of clusters. For the method 'single half \& combine half', although the mean of posterior variance is smaller, the range in bias is much larger than for random clustering, resulting in a higher RMSE. Overall, the trend is that clustering methods using more and smaller clusters outperform methods with fewer and larger clusters. Note that with variant of supplementary standard deviations and sample sizes, the result of this scenario is a more realistic hence difficult version of the simulation study shown in Figure~\ref{fig:motivsim}, which creates a more complex structure in term of exchangeability and yields a less significant advantage in results. 

\subsection{Scenario II: Varying degrees of nonexchangeability}
\label{sec:sim_scen2}

This scenario evaluates the performance of our method when some supplementary sources are "nearly" exchangeable with the primary source. The number of exchangeable supplementary sources is 50, plus 5 groups of 10 nonexchangeable sources with source means in each group set to $1,0.8,0.6,0.4$ and $0.2$, respectively. All the other settings are the same as in Scenario I. Supplementary sources with means closer to zero are even more difficult to identify as nonexchangeable based on marginal weights, and hence our changepoint-based method will likely inadvertently include some nonexchageable sources in the final MEM. Results are shown in sub-panel (b) of Figure~\ref{fig:sim_plot}.
%This scenario is trying to create a case where the change-point detector may have a hard time to find a change-point in the marginal weights of sources with nearly continuous means. However, the change-point detector could always find a change-point within the 'smoothed' curve of marginal weights. \\

As expected, due to the inclusion of some nonexchangeable sources in the final MEM, bias (and hence RMSE) are slightly larger in Scenario II than in Scenario I for all methods. However, the relative ranking of clustering methods is mostly unchanged. All dMEM methods with 10 clusters are equally well and better than other methods in term of RMSE. The median RMSE of dMEM methods is 1.6 times smaller than the median MSE of iMEM $q=10$. After looking at the marginal weights distribution of the supplementary sources, we found that the change-point detector struggled to identify sources as nonexchangeable when they had means below 0.4, which in this scenario corresponds to 0.4 standard deviations. Nonexchangeable sources most likely to be misclassified as exchangeable are those that are "nearly" exchangeable, and inclusion of these near-exchangeable sources in the final MEM drive up some bias, but this moderate misclassification of nonexchangeable supplementary sources does not lead to substantial inflation in RMSE. 

\subsection{Scenario III: Varying proportions of nonexchangeable sources}
\label{sec:sim_scen3}

This scenario shows the change in performance of dMEM with respect to the proportions of exchangeable sources when the total number of supplementary sources is fixed. The results of dMEM are compared with iMEM ($q=10$). There are 500 supplementary sources in total and we test the cases with $10\%, 20\%, 30\%, 40\%$ and $50\%$ exchangeable sources respectively for both dMEM and iMEM. All sources are otherwise sampled similarly as in scenario I. Results are shown in sub-panel (c) of Figure~\ref{fig:sim_plot}.

From the simulated RMSE, dMEM performs slightly worse than iMEM when the proportion is $10\%$, but increasingly achieves more advantages over iMEM when the proportion of exchangeable sources increases. The posterior variance of both iMEM and dMEM are decreasing with larger proportions, but the trend is more significant for dMEM. Meanwhile, the bias and RMSE dramatically decrease only for dMEM. For iMEM ($q=10$), although the true exchangeability level are kept same, the selected top 10 sources would have higher chance to have larger marginal weights and lead to smaller posterior variances when there are more exchangeable sources to choose from, but this effect does not help with bias and the general RMSE. Note that these results are only valid with the tested relative exchangeability levels of sources. When we increase the mean of nonexchangeable sources from 1 to 1.5, the trend in dMEM bias disappears and dMEM has lower posterior variance, bias and RMSE than iMEM under all circumstances.

 Along with another simulation on small number of exchangeable sources available in Web Appendix C Supplementary Simulation Scenario I, the only case that the dMEM may not outperform iMEM is when the exchangeability of sources are not clearly distinguishable and the number or proportion of exchangeable sources is very small (say, 5\% to 10\% of the total supplementary sources), the changepoint detector might not be able to drop enough sources and could result to higher bias. When the true number of exchangeable sources is greater than $q$, dMEM tends to borrow more from supplementary sources compared with iMEM, which explains the performance difference between iMEM and dMEM when the proportion of exchangeable source changes. In general, due to the potential bias-variance trade-off, dMEM has better performance compared with iMEM when a relatively higher proportion of sources are exchangeable.

\subsection{Implementation guidance}
\label{sec:guide}

From the simulating results shown above, all of the dMEM methods except '1 cluster' have lower posterior variance, bias and MSE on average with smaller ranges than iMEM under Scenarios I, II and III (cases with over 10\% exchangeable supplementary sources), and have approximately equal performance under Scenario III (10\% exchangeable supplementary sources) . The ESSS of dMEM is also 2 to 3 times higher than iMEM, which indicates dMEM is making a better use of supplementary sources. The Gaussian dMEM also demonstrates considerable robustness with non-Guassian supplementary sources, the corresponding evidence is available as the supplementary simulation scenario III in Web Appendix C. There are also cases when the marginal scores are not reflecting the correct exchangeability, an example is shown as the Supplementary Simulation Scenario II in Web Appendix C, where the performance of dMEM and iMEM is approximately the same. In conclusion, dMEM works well and robust under different scenarios, as well as improves the performance of iMEM with higher efficiency in data usage while keeps iMEM's advantage in computational complexity. 

Because the selected sources are all considered as potentially exchangeable, although evenly distributed clustering performs the best for most of the scenarios, random clustering also works not significantly worse than other tested clustering methods. An alternative to random clustering is taking the average of the estimates from multiple random clustering approaches to obtain the final results to smooth over the randomness of clustering. This option should be considered if there is a concern that the change-point detector may not work well or there is severe diversity in marginal posterior weights of selected supplementary sources. In addition, if eliminating bias is the priority, ordered combine is recommended. Also, if there is evidence that parts of the supplementary sources should be combined together, such as the affiliation to the same origin or a minor but clear change-points are detected, more case-specific decisions in the way of clustering have to be made. Nevertheless, random clustering is the default clustering method of dMEM. 

If it is expected to have very small proportion of supplementary samples to be exchangeable and bias becomes a major concern, we recommend either fixing the maximum number of selected sources at a smaller value or adding a penalty term related with the number of selected sources on the changepoint detector to limit the borrowing from supplementary sources.

\section{Application}
\label{sec:app}

In this section, we show a real-world application of dMEM using data collected by Daynamica \citep{fan2015smartrac}, a mobile phone application that automatically captures daily activities and trips using smartphone sensors and machine learning techniques. The data we use in this example comes from a study implemented in Minneapolis area; 356 participants were asked to use Daynamica for seven days and, for each activity and trip, rate the intensity of six emotions (happy, meaningful, stressful, tired, pain and sad) on a 1-7 scale. To provide individualized solutions on traffic safety and mental health, transportation experts are specifically interested in the person's emotional status during different modes of trips (walk, car, bus or bike), hence our interest is in estimating, for a given individual $i$, the mean emotional intensity $\mu^{i}(e,t)$ for emotion type $e$ and trip type $t$, for example $\mu^{ID=2021}(\text{stressful}, \text{car})$. Few study participants have over 20 trips for any individual trip mode, leading to low precision for estimating $\mu^{i}(e,t)$ if the individual-level mean  $\bar y^{i}(e,t) = \frac{1}{M^i(e,t)} \sum_j y^{i}_{j}(e,t)$ is used (here, $y^{i}_{j}(e,t)$ is the reported intensity of emotion $e$ from trip $j$ of type $t$ for subject $i$, and $M^i(e,t)$ is the total number of such trips). Thus, we may want to borrow trip- or activity-specific emotional intensity data from other individuals using MEM. However, due to the large number of potential supplementary sources, it is not possible to fit a full MEM.  The same dataset was analyzed in \cite{brown2020iterated} to illustrate the significant advantage of iMEM compared with naive approaches for estimating $\mu^{i}(e,t)$ such as calculating the simple mean from the individual or using the best linear unbiased predictor (BLUP) from a random effects model. Here, we explore when dMEM performs better or worse than iMEM. While the ordinal data in this application are not Gaussian, supplementary simulation scenario III  (Web Appendix C) shows that dMEM appears to perform relatively well with such data.

For this illustration, we estimated $\mu^{i}(e,t)$ for each individual $i$ and every $(e,t)$ combination for which $M^i(e,t) \geq 10$. Supplementary sources $j$ were included provided $M^j(e,t) \geq 5$. These criteria allowed estimation of 1910 $\mu^i(e,t)$ parameters. The methods we compare in this example are marginal iMEM with $q=10$ and dMEM with ordered combined clustering with $M=10$. The percent reduction in posterior standard deviation of dMEM and iMEM relative to the standard deviation of primary source is used to compare the performance of the methods. %compared with the simple standard deviation directly calculated from the primary mean for each estimation, the amount of SD reduction is calculated by $\frac{Simple SD - Posterior SD}{Simple SD} \times 100\%$. 

Overall, the percent reduction in posterior standard deviation by dMEM is 83.5\%, compared with 80.0\% for iMEM. The difference between dMEM and iMEM is relatively modest at 3.5\%, though 78.3\% of estimations yield a lower posterior SD with dMEM than iMEM and the ESSS of dMEM is over 2 times that of iMEM (1101 vs. 449); see Web Appendix B Figure 3. It is not entirely surprising that the overall mean value of pairwise difference in percent SD reduction is not big since both iMEM and dMEM perform well in most of the estimations: around 70\% of iMEM and 75\% of dMEM estimations reduce posterior SD by over 80\% compared to simple SD, so there is not much room for further improvement and the marginal gain from more intensive borrowing is modest in this specific data set.

Figure~\ref{fig:app_plot} provides some insight into situations where the performance of dMEM and iMEM differs substantially. Figure~\ref{fig:app_plot} (a) shows the number of selected sources of dMEM for each trip mode compared to the total number of available sources and, for reference, the constant number (10) selected by iMEM. For trip modes with a smaller number of available supplementary sources (bike and bus), dMEM typically borrows fewer than 10 supplementary sources, and yields a 10.7\% extra reduction in posterior standard deviation on average compared with iMEM. The top two panels of Figure~\ref{fig:app_plot} (b) illustrate two cases where, by applying the changepoint detector to the marginal MEM weights, dMEM (apparently correctly) identifies fewer supplementary sources for inclusion than iMEM. This shows that dMEM can be valuable as a mechanism for selecting the number of supplementary sources to include even if that number is quite small, addressing a concern raised in \citet{brown2020iterated}. The bottom two panels of Figure~\ref{fig:app_plot} (b) illustrate the opposite situation, where dMEM selects many more than 10 sources for inclusion in the final MEM. In both estimations, dMEM achieves a $>25\%$ reduction in posterior SD relative to iMEM.

Another situation where dMEM appears to perform well is in those cases where iMEM offers little performance benefit over the simple mean. If we define an SD reduction of less than 20\% relative to the simple mean as 'poor performance', there are 93 iMEM and 68 dMEM estimations perform poorly. Among those poor-performing iMEM estimations, the average difference in SD reduction between dMEM and iMEM is 24.1\% with dMEM outperforming iMEM in 87 (93.6\% of) cases. In contrast, among the 68 poor-performing dMEM estimations, the average difference in SD reduction between dMEM and iMEM is -13.6\%, and in only 7 out of 68 (10.3\% of) cases does iMEM outperform dMEM.

\section{Discussion}
\label{sec:diss}

dMEM is a novel method which can be used to borrow information from a large number of supplementary sources with the goal of improving inference about parameters in a primary source.  dMEM increases efficiency by including as many exchangeable sources as possible and minimizes bias by eliminating nonexchangeable sources. In contrast to existing methods whose computation time scales exponentially with the number of supplementary sources,  computations for dMEM grow linearly making it computationally feasible for much larger problems. With the massive increase in the number of distinct data sources available to inform estimation in biomedical research, dMEM is a potentially valuable tool for aggregating information from these sources. For example, dMEM could be applied to leverage information from electronic health databases coming from many different health systems or hospitals to inform estimation in a designed trial or observational study.

There are some similarities between Gaussian clustering problem and the problem that methods in MEM family could solve. The goal of MEM is to provide better inference for one or more specific primary sources by borrowing from the available data, while the objective of Gaussian clustering is to identify groups of observations with the same mean. There would be more similarities between dMEM and Gaussian clustering since the source selection procedure of dMEM is trying to detect the sources with higher probability of having the same mean with the primary source. Interested readers could explore more possibilities of merging the two ideas.

We note a few limitations and directions for future research.  First, as with MEM and iMEM, dMEM is only designed for estimating the means of the sources; while the univariate Gaussian case has closed-form posteriors, posterior sampling methods are required for sources with other distributions. How to generalize multisource exchangeability models to estimate more parameters and more complex models is an open research question. \cite{kotalik2020dynamic} and \cite{boatman2020borrowing} have built up extensions of MEM on estimating causal treatment effect as regression coefficients. Second, as shown in Web Appendix C Scenario II, the ordering of the marginal posterior weights does not always represent the ordering in exchangeability of supplementary sources. This issue exists in both iMEM and dMEM. The marginal posterior weight is a measure of apparent similarity that balances between the mean and the uncertainty, and hence it is possible to construct extreme scenarios where the apparent similarity does not reflect true exchangeability. Other measures of exchangeability could be further explored in the future. Third, we do not fully take the uncertainty in clustering into account. Our simulations showed that reusing the information of marginal weights in clustering does not provide too much benefit compared with random clustering. Other clustering methods such as unsupervised techniques could be used to potentially define clusters that yield improved inference in the final MEM. Lastly, although we found that a standard Gaussian likelihood ratio-based change-point detector performed well for separating exchangeable and nonexchangeable sources based on the posterior marginal weights, these weights are not normally distributed, and hence other change-point detection methods might yield improved performance.  

\section{Software}
\label{sec:software}

Software in the form of R code, together with the simulated results and application data set, is available on GitHub (https://github.com/Ziyu-Ji/Data-driven-Multisource-Exchangeability-Models).

\section{Supplementary Material}
\label{sec:supple}

Supplementary material is available online at
% \href{http://biostatistics.oxfordjournals.org}%
% {http://biostatistics.oxfordjournals.org}.
\url{http://biostatistics.oxfordjournals.org}.

\bibliographystyle{biorefs}
\bibliography{refs}

\begin{figure}[!p]
\begin{center}
\includegraphics[width=5.5in]{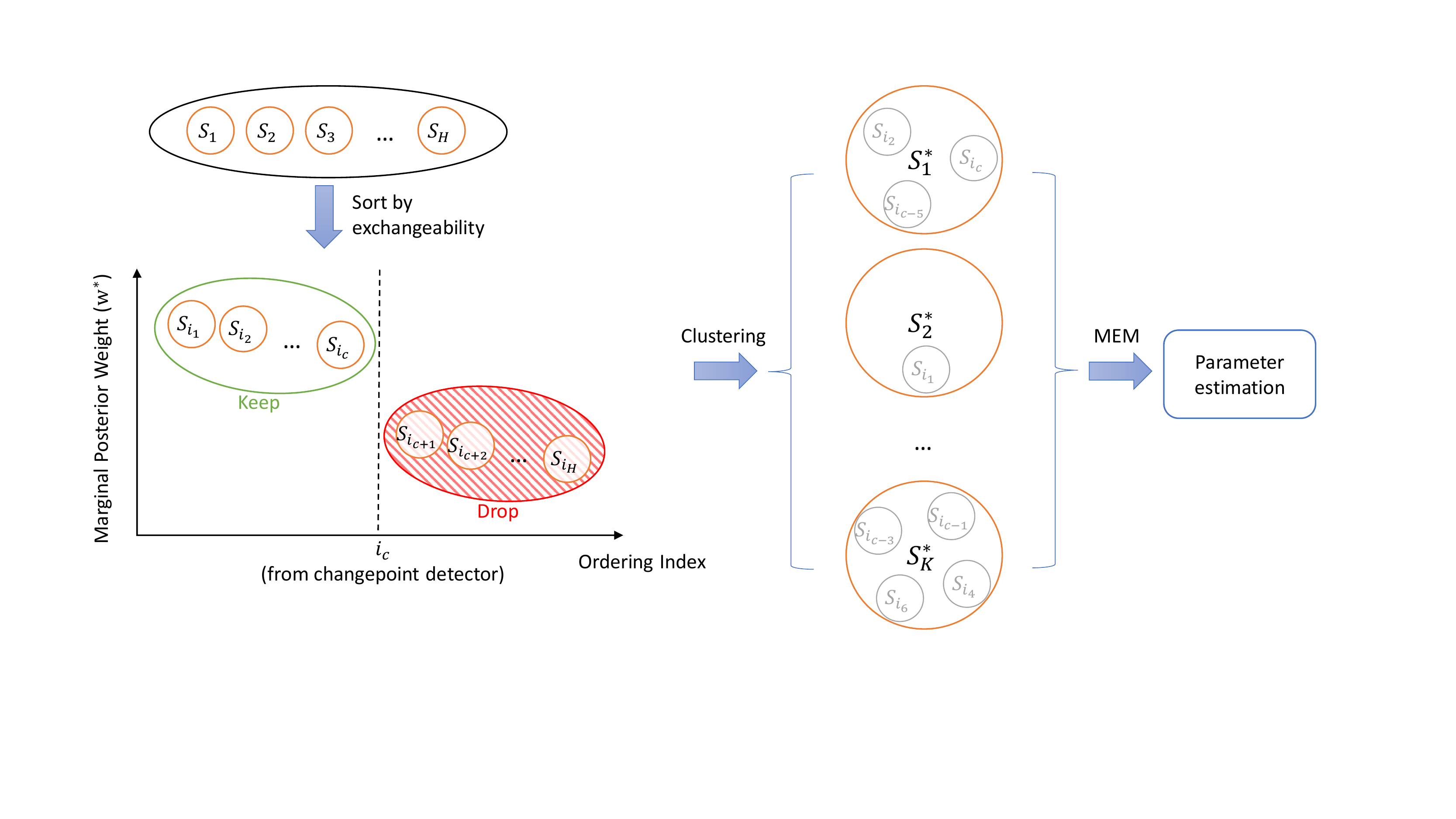}
\caption{Diagram of dMEM method \label{fig:diagram}}
\end{center}
\end{figure}

\begin{figure}[!p]
\begin{center}
\includegraphics[width=4.75in]{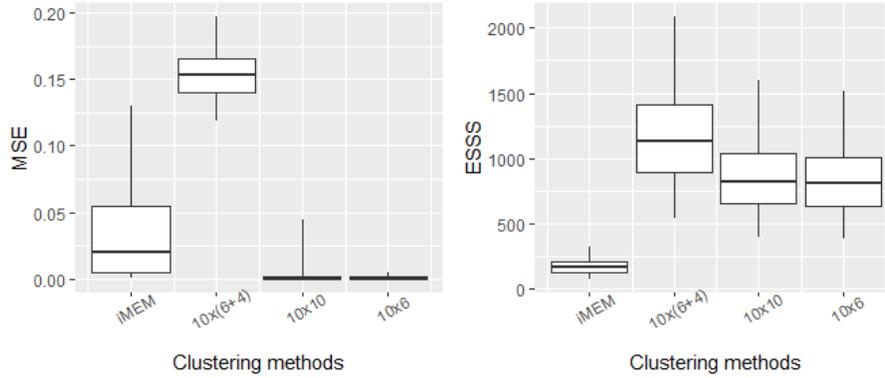}
\caption{Box plot of the posterior MSE and ESSS for proof of concept simulation study. The primary source follows $N(0,1)$. The supplementary sources include 60 exchangeable sources following $N(0,1)$ and 40 nonexchangeable sources following $N(1,1)$. All sources have sample size 20. The simulation is repeated 1,000 times to achieve the final results. The methods are described as follows. 'iMEM' means iMEM with 10 exchangeable sources. '$10 \times 10$' has 6 clusters randomly including 10 exchangeable sources and 4 clusters randomly including 10 nonexchangeable sources. '$10 \times (6+4)$' has 10 clusters, each cluster randomly includes 6 exchangeable sources and 4 nonexchangeable sources.
'$10 \times 6$' also has 10 clusters, with each cluster randomly includes 6 exchangeable sources.  \label{fig:motivsim}}
\end{center}
\end{figure}

\begin{table}[!p]
\begin{center}
\begin{adjustbox}{max width =\textwidth}
\begin{tabular}{l|cccp{11cm}}
\textbf{\makecell{Name of the \\ Methods}} & \textbf{ \makecell{Single-source\\ Clusters}} & \textbf{\makecell{Combined\\ Clusters}} & \textbf{\makecell{Randomly\\ Clustered}} & \textbf{Description}\\\hline\hline
kmeans & 0 & 10 & No & Baseline: assign all sources into 10 clusters by using K-means on their sample means. \\\hline
iMEM & 10 & 0 & No & Baseline: iMEM with final group size equal to 10. \\\hline
ordered combine & 0 & 10 & No & 10-fold split the descending ordered sources, combine the 10 sources in the first fold with the highest weights as the first cluster, then combine the 10 sources in the second fold as the second cluster and etc. \\\hline
evenly distributed & 0 & 10 & No & 10-fold split the sorted sources, combine the sources with highest weights in each fold as a cluster, then combine the sources with second highest weights in each fold and etc. \\\hline
single half \&  combine half & 5 & 5 & No & Use top 5 sources with the highest weights as single-source clusters and combine the rest sources in order into 5 clusters. \\\hline
random & 0 & 10 & Yes & Randomly combine the sources into 10 clusters. \\\hline
random averaging & 0 & 10 & Yes & Randomly combine the sources into 10 clusters for 10 times and take the average of their posterior results. \\\hline
3 cluster & 0 & 3 & Yes & Randomly combine the sources into 3 clusters. \\\hline
1 cluster & 0 & 1 & Yes & Combine all sources. \\
\end{tabular}
\end{adjustbox}
\end{center}
\caption{Tested clustering methods in simulation study \label{tab:simclstmeth}}
\end{table}

\begin{figure}[!p]
\begin{center}
\includegraphics[width=6in]{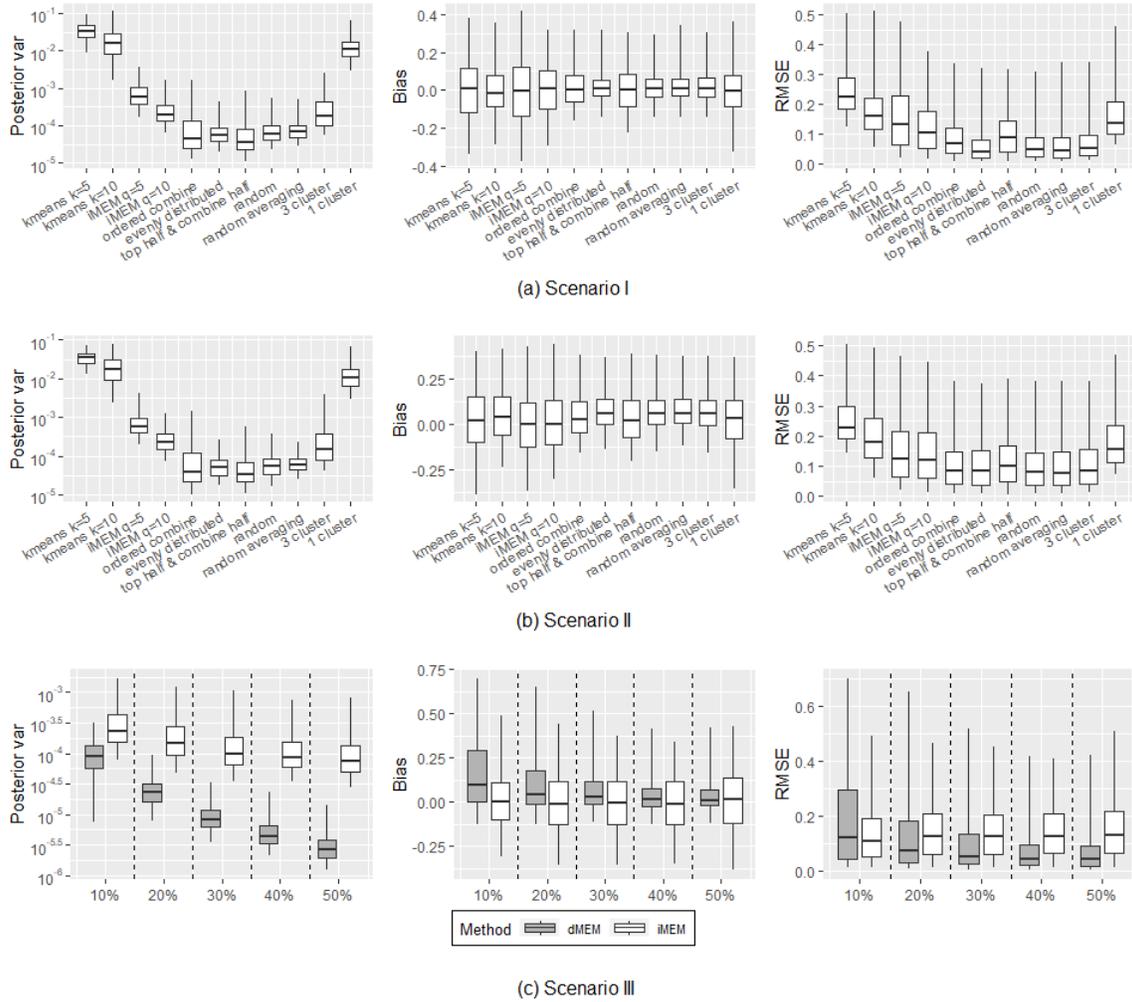}
\caption{Log posterior variance, bias and RMSE (root-square-mean error) for simulated scenarios. The box plots are showing the 2.5\%, 25\%, 50\%, 75\% and 97.5\% percentiles of the 1000 simulating results for each method. In Scenario III, the x-axis is the proportion of truly exchangeable sources to the total 500 supplementary sources. \label{fig:sim_plot}}
\end{center}
\end{figure}

\begin{figure}[!p]
\begin{center}
\includegraphics[width=6in]{app_plot.pdf}
\caption{(a) Bar blot for the numbers of selected supplementary sources of iMEM (always be 10 in our case) or dMEM, and the means of total available supplementary sources. The error bars are the 5\% and 95\% percentile of selected sources by dMEM. \protect\linebreak
(b) Examples of sorted marginal weights and selected sources. The vertical lines are the position of the last selected source by iMEM and dMEM. The top two examples have small numbers of total supplementary sources and the final estimations benefit from selecting less sources. The bottom two examples with large numbers of total supplementary sources benefit from selecting more sources. Note that there are examples with large numbers of total supplementary sources would benefit from selecting less sources, but less common. \label{fig:app_plot}}
\end{center}
\end{figure}

\end{document}